\documentclass[sigconf]{acmart}

\usepackage{booktabs}
\usepackage{multirow}
\usepackage{multicol}
\usepackage{enumerate}
\usepackage{subfigure}
\usepackage{lipsum}
\usepackage{algorithm}
\usepackage[noend]{algorithmic}
\usepackage{threeparttable}
\usepackage{tikz}
\usepackage{graphicx}
\usetikzlibrary{arrows, shapes, positioning}
\usetikzlibrary{decorations.markings}
\tikzstyle arrowstyle=[draw]
\tikzstyle directed=[postaction={decorate,decoration={markings,
		mark=at position 0 with {\node[circle,fill=black,inner sep=0.3ex] {};},
		mark=at position 0.5 with {\arrow[scale=1.2,arrowstyle]{stealth};},
		mark=at position 1 with {\node[circle,fill=black,inner sep=0.3ex] {};},
}}]

\AtBeginDocument{%
  }

\setcopyright{acmlicensed}
\copyrightyear{2018}
\acmYear{2018}
\acmDOI{XXXXXXX.XXXXXXX}

\acmConference[Conference acronym 'XX]{Make sure to enter the correct
  conference title from your rights confirmation email}{June 03--05,
  2018}{Woodstock, NY}
\acmISBN{978-1-4503-XXXX-X/18/06}

\begin{document}

\title{An efficient recursive decomposition algorithm for  undirected graphs}

\author{Pei Heng,Yi Sun and Jianhua Guo}
\authornote{The first two authors contributed equally to this research.}

\renewcommand{\shortauthors}{Trovato et al.}

\begin{abstract}
  Graph decomposition is a versatile and powerful tool used to tackle complex systems on a large scale, playing a crucial role in various fields like graph theory, artificial intelligence, and statistics. In this study, we explore the relationship between atoms (i.e., maximal prime subgraphs) and convex hulls in undirected graphs. Building on this connection, we introduce a recursive decomposition algorithm with polynomial complexity for undirected graphs that doesn't rely on triangulation. Our experimental findings demonstrate that this algorithm operates significantly faster than current state-of-the-art methods. 
\end{abstract}

\begin{CCSXML}
<ccs2012>
   <concept>
       <concept_id>10002944.10011122.10002949</concept_id>
       <concept_desc>General and reference~General literature</concept_desc>
       <concept_significance>300</concept_significance>
       </concept>
   <concept>
       <concept_id>10003752.10003809.10003635</concept_id>
       <concept_desc>Theory of computation~Graph algorithms analysis</concept_desc>
       <concept_significance>500</concept_significance>
       </concept>
 </ccs2012>
\end{CCSXML}

\ccsdesc[300]{General and reference~General literature}
\ccsdesc[500]{Theory of computation~Graph algorithms analysis}

\keywords{Undirected graph, Decomposition, Atom, Convex hull}

\received{20 February 2007}
\received[revised]{12 March 2009}
\received[accepted]{5 June 2009}

\maketitle

\section{Introduction}
An effective strategy for handling complex graphs is the divide-and-conquer approach. This method involves breaking down the graph into smaller subgraphs using specific separators, solving individual problems on each subgraph, and then combining the local solutions to derive a global solution. This technique is widely used in various tasks such as minimal triangulation \cite{berry2010introduction}, probabilistic reasoning 
\cite{lauritzen1988local, olesen2002maximal, jensen1990}, constraint satisfaction \cite{dechter1988tree}, and structural learning \cite{geng2005decomposition,zhang2020learning}.

Divide-and-conquer strategies in graph theory often involve using separators to construct separation trees or employing decompositions to build decomposition trees. This paper focuses on the latter approach, which decomposes undirected graphs using clique minimal separators. This technique is also known as simplicial decomposition \cite{diestel1990graph} or atom decomposition  \cite{berry2010introduction}, terms used interchangeably here. Graph decomposition techniques trace back to Gavril's algorithm \cite{gavril1977algorithms} in 1977 for efficiently dividing graphs into subgraphs, initially aimed at solving problems like minimum coloring and maximum cliques.

Whitesides \cite{Whitesides1981AnAF} introduced a method to discover clique separators with a computational complexity of $O(n^3)$, where $n$ denotes the number of nodes in the network. Tarjan \cite{tarjan1985decomposition} expanded on this work by proposing an $O(mn)$ algorithm, where $m$ represents the number of edges, for decomposing arbitrary graphs using clique separators. Tarjan also highlighted that Whitesides's algorithm operates in 
$O(n^3m)$ time. Leimer \cite{leimer1993optimal} building on Tarjan's work \cite{tarjan1985decomposition}, developed an enhanced decomposition method with $O(nm)$ complexity. This approach accurately decomposes graphs into their maximal prime subgraphs by first employing the LEX-M algorithm \cite{rose1976algorithmic} to establish a minimal triangulation ordering. Kratsch and Spinrad \cite{kratsch2006minimal} proposed an algorithm with $O(n^{2.69})$ complexity for determining minimal triangulation orderings, and showed that clique separator decomposition can also be achieved within the same time bound. 

Berry et al. \cite{berry2014organizing} introduced further optimizations to reduce the time required for decomposition using minimal triangulation orderings. Coudert and Ducoffe \cite{coudert2018revisiting} demonstrated that decomposing a graph via clique separators is at least as challenging as triangle detection, and they showed that a given undirected graph can be decomposed in $O(n^\alpha \log n) = \widetilde O(n^{2.3729})$ \footnote{The $\widetilde O$ notation ignores logarithmic factors.} time. As of current research, all widely recognized atom decomposition algorithms for general graphs are rooted in the concept of minimal triangulation for graphs. This approach is well-documented in related works such as those by \cite{berry2010introduction, berry2014organizing,coudert2018revisiting}.

The typical execution framework of these algorithms, as outlined in \cite{berry2014organizing}, follows these steps:

\begin{itemize}
    \item \textbf{Compute the Minimal Triangulation Ordering}: First, determine a minimal triangulation ordering for a graph. This involves ordering the vertices such that the resulting graph is chordal (also known as a triangulated graph).
    \item \textbf{Find Clique Minimal Separators}: Using the computed minimal triangulation ordering, identify clique minimal separators in the graph. These are subsets of vertices whose removal leaves connected components that are themselves cliques (complete subgraphs).
    \item \textbf{Recursively Decompose the Graph}: Utilize the identified clique minimal separators to recursively decompose the graph into smaller subgraphs. This process continues until each resulting subgraph is fully decomposed into its prime components or atoms.
\end{itemize}

This structured approach leverages the properties of chordal graphs and the efficient computation of minimal triangulations to facilitate effective decomposition using clique separators. It represents a significant advancement in graph theory, enabling complex graph problems to be tackled through a divide-and-conquer strategy built on these foundational concepts. Finding the minimum triangulation ordering of a graph is known to be NP-hard \cite{yannakakis1981computing}, which leads existing algorithms to focus on achieving minimal triangulation. Currently, two primary methods exist for determining this ordering in undirected graphs: one involves variable elimination \cite{ohtsuki1976minimal,rose1976algorithmic,berry2004maximum,kratsch2006minimal}, while the other utilizes saturating relative minimal separators \cite{parra1997characterizations,berry2006wide,heggernes2006minimal}. The optimal complexity of the minimal triangulation algorithm is 
$o(n^{2.376})$, as proposed by \cite{heggernes2005computing}. For further details on minimal triangulation algorithms, see the survey by \cite{heggernes2006minimal}. 


Kratsch and Spinrad \cite{kratsch2006between} note that despite knowing the minimal triangulation ordering of a graph, finding clique separators remains challenging. Recently, Coudert and Ducoffe \cite{coudert2018revisiting} demonstrated that clique minimal separators can be computed in $o(n^{2.3729})$ time using the minimal triangulation ordering of a graph. However, although these algorithms theoretically break the $O(n^3)$ barrier, the practical improvements in execution efficiency are limited. This results in daunting runtime complexities for clique separator decomposition algorithms, especially when applied to large networks with tens of thousands of nodes \cite{kratsch2006between,crescenzi2013computing}.


In this paper, we reveal that atoms can be identified by examining convex hulls that encompass specific vertices. This insight inspires a recursive approach to decomposing large graphs, in which atoms are recursively identified by considering convex hulls containing closures of designated variables.  Our method addresses existing limitations and significantly enhances decomposition efficiency. The main contributions of this work encompass both theoretical advancements and algorithmic innovations:


\begin{itemize}
    \item \textbf{Theoretical Contribution:} We show that the convex hull containing the closure of a node with the smallest maximum cardinality search (MCS) ordering number forms an atom of the graph.
    \item \textbf{Algorithmic Contribution:} We devise an efficient recursive decomposition algorithm for graphs. This algorithm identifies atoms by recursively searching for convex hulls that encompass specific variables. Furthermore, we have extended this algorithm to a parallel version, enhancing efficiency by strategically selecting the MCS ordering.
    \item \textbf{Experimental Validation:} Empirical experiments on real networks are conducted to compare and analyze the performance of our algorithm with traditional approaches. The results show that our method significantly outperforms traditional decomposition algorithms in terms of efficiency.
\end{itemize}

The rest of this paper is structured as follows: Section~\ref{sec-2} introduces essential graph notation and terminology. Section~\ref{sec-3} explores the theoretical connection between atoms and convex hulls, from which we derive our recursive decomposition algorithm (RDA) and its parallel version for large-scale graphs. Section~\ref{sec-4} briefly discusses potential applications of our decomposition algorithm. Section~\ref{sec-5} outlines simulations conducted to validate our approach. Finally, Section~\ref{sec-6} presents our conclusions and offers a brief discussion.

\section{Preliminaries}\label{sec-2}
  
Throughout this paper, we focus exclusively on simple and undirected graphs which are connected and finite. A graph is denoted by $G = (V, E)$, comprising a set of nodes $V$ and a set of undirected edges $E$.  We use 
$(u,v)$ to indicate that nodes $u$ and $v$ are connected by an edge in $G$; thus, $u$ and $v$ are adjacent. The set 
$N_G(v)$ denotes the neighbors of node $v$ in $G$, and we omit the subscript $G$ when context permits.

For convenience, we define the useful sets:
\begin{itemize}
    \item $N[v] = N(v)\cup \{v\}$; \item $N(A) = \cup_{w\in A}N(w)\setminus A$;
    \item $N[A] = N(A)\cup A$.
\end{itemize}

An undirected path of length $k$ connecting $u,v$ in $G$, denoted by $l_{uv}$, is defined as a sequences of vertices $\ (u = c_0,c_1,\dots,c_k = v)$ such that each consecutive pair $c_{i}$ and $c_{i+1}$ are adjacent in $G$ for $i\in [0,k-1]\triangleq \{0,1,\ldots,k-1\}$ and all nodes $c_i$ and $c_j$ are distinct. If $u=v$, $l_{uv}$ is referred to as a cycle of $G$. An edge connecting two non-consecutive nodes in $l_{uv}$ is termed a chord of it. The internal nodes of the path $l_{uv}$  are denoted as $V^o(l_{uv})$.

A subset $A$ is termed complete in $G$ if every pair of nodes in $A$ is adjacent. For a subset $A \subseteq V$, the induced subgraph of $G$ on $A$ is denoted as $G_A=\left(A, E_A\right)$, where $E_A=E \cap(A \times A)$. We say that the subgraph $G_A$ is connected if, for any pair of nodes $x$ and $y$ in $A$, there exists a path $l_{xy}$ in $G_A$.
A connected component of $G$ is a subgraph $G_A$ that is connected, and for any $G_H \ subsetneq G_A, G_H$ is not connected in $G$.

For $u, v \in V$, a subset $S \subseteq V$ is called a $ uv$-separator of $G$ if every path connecting $u$ and $v$ must pass through $S$. If $S$ is a $uv$-separator and there exists no subset $S^{\prime} \subsetneq$ $S$ that is also a $uv$-separator of $G$, then $S$ is referred to as a minimal $uv$-separator. It is shown that $S$ is a minimal $xy$-separator if and only if there exist two connected components $X$ and $Y$ in $G_{V\backslash S}$ such that $N_G(X)=S=N_G(Y)$ \cite{leimer1993optimal}. In this case, $X$ and $Y$ are also referred to as full components of $S$. 

We say that a minimal separator of $G$ is a clique minimal separator of $G$ if it is complete. An atom of $G$ is defined as a maximal connected subgraph that contains no clique minimal separator. A clique minimal separator $S$ can decompose the graph $G$ into subsets $A \cup S$ and $B \cup S$, denoted as $(A, B, S)_G$, where $A$ is a full component of $S$, and $B=V \backslash(A \cup S)$, see \cite{berry2010introduction} for details.

For a graph $G$, a node ordering, denoted by $\alpha$, is a bijection from $V$ to $\{1, \ldots, n\}$. The label of node $v$ in ordering $\alpha$ is $\alpha(v)$. Let $v_1, v_2, \ldots, v_n$ be an ordering of $V$ and  $\mathcal{L}_i=\left\{v_i, v_{i+1}, \ldots, v_n\right\}$.
A graph $G$ is chordal if every cycle of length greater than 3 contains at least one chord. The Maximum Cardinality Search algorithm (MCS) was originally proposed by Tarjan \cite{tarjan1976maximum} and further developed by Tarjan and Yannakakis \cite{tarjan1984simple} for detecting chordal graphs. As the decomposition algorithm proposed in this paper relies on the MCS algorithm in arbitrary graphs, we emphasize and describe it in Algorithm \ref{alg-1}. In this work, we use the MCS algorithm to obtain a node ordering of $G$, which helps in locating the atoms of $G$. For simplicity, the node ordering of $G$ returned by the MCS algorithm is referred to as the MCS ordering of $G$.

To proceed with our work, we introduce the symbols $u^{-}$ and $u^{+}$ to denote specific time stamps. Here, $u^{-}$ represents the time at which vertex $u$ is selected to receive its number, and $u^{+}$ represents the time when vertex $u$ completes its numbering. Specifically, at time $u^{-}$, the weight of $v$ is denoted as $w_{u^{-}}(v)$, and at time $u^{+}$, it is denoted as $w_{u^{+}}(v)$. 
Similarly, for any subset $A \subseteq V$, $w_{u^{-}}(A)$ and $w_{u^{+}}(A)$ represent the highest weight among the unnumbered vertices in $A$ at times $u^{-}$ and $u^{+}$, respectively. See Figure~\ref{fig-3} for an illustration.

\begin{example}
	Let $G$ be the graph shown in ~Figure~\ref{fig-3}. To illustrate the process of labelling each node in this graph by using the MCS Algorithm~\ref{alg-1}, for each node $v$, the final label of $v$, $\alpha(v)$,  is written outside the parentheses next to it, while the weight $w(v)$ is written within the parentheses. It can be verified that $w_{c^-}(b) = 1$ and $w_{c^+}(b) = 2$, as $b$ is in the neighborhood of $c$ and remains unnumbered.  Similarly, we can find that $w_{c^-}(\{a,b\}) = \max\{w_{c^-}(a),w_{c^-}(b)\} = w_{c^-}(b)=1$ and $w_{c^+}(\{a,b\}) = \max\{w_{c^+}(a),\\w_{c^+}(b) \}= w_{c^+}(b) = 2$.

\end{example}

\begin{algorithm}[H]
	\caption{Maximum Cardinality Search Algorithm (MCS)}
	\label{alg-1}
	\begin{algorithmic}[1]
		\REQUIRE An arbitrary undirected graph $G=(V,E)$.
		\ENSURE A MCS ordering $\alpha$ of the nodes of $G$.
		\STATE Initialize: $\mathcal{L}_{n+1} = \emptyset$, $w(v) = 0$ for all $v\in V$;
		\FOR{$i$ from $n$ to 1}
		\STATE Choose a node $v\in V\backslash\mathcal{L}_{i+1}$ such that $w(v)$ is maximum;
		\STATE Update $\alpha(v)$ to $i$, and $\mathcal{L}_i$ to $\mathcal{L}_{i+1}\cup \{v\}$;
		\FOR{$v^\prime \in N(v)\cap (V\backslash \mathcal{L}_i)$}
		\STATE Update $w(v^\prime)$ to $w(v^\prime)+1$;
		\ENDFOR
		\ENDFOR
		\RETURN The ordering $\alpha$.
	\end{algorithmic}
\end{algorithm}

	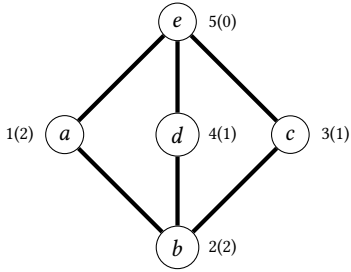
\begin{figure}[ht]
		\vskip 0.08in
		\begin{center}
			\begin{tikzpicture}[scale=0.75, every node/.style={minimum size=0.5cm}]
				\node[shape=circle, draw=black] (1) at (0,0){$a$};
				\node at (-0.8,0){\footnotesize 1(2)};
				\node[shape=circle, draw=black] (2) at (2,-2){$b$};
				\node at (2.8,-2){\footnotesize 2(2)};
				\node[shape=circle, draw=black] (3) at (4,0){$c$};
				\node at (4.8,0){\footnotesize 3(1)};
				\node[shape=circle, draw=black] (4) at (2,0){$d$};
				\node at (2.8,0){\footnotesize 4(1)};
				\node[shape=circle, draw=black] (5) at (2,2){$e$};
				\node at (2.8,2){\footnotesize 5(0)};
				\draw[-,ultra thick] (1)--(2);
				\draw[-,ultra thick] (4)--(5);
				\draw[-,ultra thick] (2)--(4);
				\draw[-,ultra thick] (1)--(5);
				\draw[-,ultra thick] (2)--(3);
				\draw[-,ultra thick] (3)--(5);
			
			\end{tikzpicture}		
			\caption{The labels and weights for vertices in the graph when applying the MCS algorithm.}
			\label{fig-3}
		\end{center}
		\vskip -0.2in
	\end{figure}

Before we proceed with our work, it's important to discuss the concept of convex hulls in undirected graphs. A subgraph $G_A$ is considered convex in $G$ if, for any non-adjacent nodes $u$ and $v$ in $A$, there is no path $l_{uv}$ such that $V^o\left(l_{uv}\right) \subseteq V\backslash A$.
For any subset $R \subseteq V$, it is known that there exists a unique minimal convex subgraph $G_H$ in $G$ containing $R$, such that for any $R \subseteq B \subsetneq H, G_B$ is not convex in $G$. In this case, $G_H$ (or simply $H$ ) is referred to as the convex hull containing $R$. Convex hull finding algorithms are extensively studied in the statistics community due to their applications in model collapsibility.

Madigan and Mosurski \cite{madigan1990extension} explored identifying convex hulls on chordal graphs while seeking the minimum collapsible set for a decomposable graphical model. Wang et al. \cite{wang2011finding} expanded this research to general graphs by introducing an algorithm that utilizes clique minimal separator decomposition to identify convex hulls in any undirected graph. Building on the importance of graphical models, Heng and Sun \cite{heng2023algorithms} delved deeper into the issue and introduced efficient algorithms aimed at expanding convex hulls in undirected graphs.

In this work, we present a recursive decomposition algorithm inspired by the Close Minimal Separator Absorption (CMSA) algorithm proposed by \cite{heng2024algorithms}. This algorithm takes an undirected graph $G$ and a subset $R$ as input, and it computes the convex hull containing $R$ by absorbing specific minimal separators within 
$G$. 

\section{An efficient recursive decomposition algorithm}\label{sec-3}

Berry et al. \cite{berry2004maximum} studied the node labeled one (i.e., the node with the smallest label) by the Maximum Cardinality Search (MCS) algorithm in an arbitrary undirected graph G. They showed that this node can be given priority for removal in specific minimal elimination sequences. The proposition we aim to prove asserts that the node with the smallest number in the MCS ordering and its neighboring nodes are part of a distinct component in $G$.

\begin{proposition}\label{prop-1}
	Suppose that $G=(V, E)$ is an undirected graph with a node ordering $\alpha$ by the MCS algorithm. Let $v:=\underset{w\in V}{\arg\min}\ \alpha(w)$. The convex hull $H$ that contains the set $N_G[v]$ in $G$ forms an atom of $G$.
\end{proposition}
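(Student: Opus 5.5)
The plan is to use a structural fact about convex subgraphs to reduce primeness of $G_H$ to primeness of $G$ near $N_G[v]$. After that, the MCS property of $v$ is used only at one point: to rule out $v$ lying on a clique minimal separator.

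\emph{Step 1 (structure of convex hulls).} If $G_H$ is convex, then for every connected component $C$ of $G_{V\setminus H}$ the set $N_G(C)\subseteq H$ is complete. Indeed, two non-adjacent vertices of $N_G(C)$ would be joined by a path with all internal vertices in $C\subseteq V\setminus H$. Since $G$ is connected and $H\not\subseteq N_G(C)$ whenever $H$ is not itself a clique, $N_G(C)$ is a clique minimal separator of $G$. The full components are $C$ and the component of $G_{V\setminus N_G(C)}$ containing $H\setminus N_G(C)$; one has to check that every vertex of $N_G(C)$ has a neighbour there, using minimality of $H$. Hence $H$ is cut off from the rest of $G$ by clique minimal separators. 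An atom of $G$ is then exactly a maximal set of this kind whose induced subgraph has no clique minimal separator. So it suffices to show that $G_H$ itself contains no clique minimal separator. Maximality then follows because any strictly larger connected set $H'\supsetneq H$ meets some $C$, and $N_G(C)$ is a clique minimal separator inside $G_{H'}$.

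\emph{Step 2 (lifting separators).} Suppose $S$ is a clique minimal separator of $G_H$, with full components $A,B$ in $G_H$. Each clique $N_G(C)$ from Step 1 lies in $S\cup A'$ for a single component $A'$ of $G_{H\setminus S}$, since a clique cannot meet two components. Therefore attaching each $C$ to that side shows $S$ is also a clique minimal separator of $G$. If $N_G[v]\subseteq S\cup A'$ for some component $A'$, let $H^\ast$ be $S\cup A'$ together with the outside components attached to it. Then $H^\ast$ is convex: any path leaving $H^\ast$ between non-adjacent vertices must return through the clique $S$ or through some clique $N_G(C)$, so its endpoints would be adjacent or the path could be shortcut. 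Since $H^\ast\cap H\subsetneq H$, this contradicts minimality of the convex hull $H$. When $v\notin S$, we automatically have $N_G[v]\subseteq S\cup A'$ where $A'$ is the component containing $v$. So the only remaining case is $v\in S$ with $v$ having neighbours in both full components.

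\emph{Step 3 (MCS lemma, main obstacle).} It remains to prove that the vertex $v$ numbered $1$ by MCS lies in no clique minimal separator of $G$. I would argue by contradiction using the time stamps $u^-,u^+$. Let $S\ni v$ be a clique minimal separator with full components $A,B$. Assume without loss of generality that MCS numbers its first vertex of $A\cup B$ in $B$ or in $S$, so that at that point $A$ is still untouched. Then I would compare weights at the moment $a^-$, where $a$ is the last vertex of $A$ to be numbered.

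The key inequality to aim for is $w_{a^-}(v)\ge w_{a^-}(a)$, with strict inequality or a tie-breaking argument. This should follow because $v$ is adjacent to every numbered vertex of the clique $S$, and also to numbered vertices in $A$ (since $A$ is full). By contrast, $a$ can only gain weight from $S$ and $A$. If this holds, MCS would prefer $v$ before finishing $A$, so $v$ is not last, a contradiction. I expect the delicate part to be ties and the possibility that $A$ is entered first. I would handle these by choosing $A$ to be the component whose vertices are numbered last, and possibly citing Berry et al.~\cite{berry2004maximum} on the MCS-last vertex for the weight bookkeeping.
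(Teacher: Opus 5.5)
Your Steps 1 and 2 are sound, and more careful than the paper about lifting a clique minimal separator of $G_H$ to one of $G$. The gap is in Step 3. That is the only place the MCS hypothesis enters, so it is the heart of the proposition.

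The inequality $w_{a^-}(v)\ge w_{a^-}(a)$, taken at the moment the last vertex $a$ of $A$ is chosen, does not follow from the facts you cite. Fullness of $A$ guarantees $v$ only one neighbour in $A$. By contrast, $a$ may have up to $|A|-1$ numbered neighbours in $A$: take $A$ a large clique with $v$ adjacent to a single vertex of it. So $w_{a^-}(a)$ can exceed $w_{a^-}(v)$ by a lot. Even the non-strict version would yield no contradiction, since MCS may break a tie in favour of $a$. Choosing $A$ to be the side ``numbered last'' does not help either: MCS can cross back and forth between $A$ and $B$ many times, and no comparison at a single instant captures this.

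The missing idea is to compare $v$ with the side of $S$ that MCS is \emph{not} currently working on, at the moments MCS crosses $S$. This is what the paper's alternating choice of $a_1,b_1,a_2,\dots$ does.

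Let $L$ be the set of numbered vertices. Take $x$ unnumbered in a component $X$ of $G-S$.
\begin{itemize}
\item While MCS numbers vertices outside $X$, the quantity $w(x)-|S\cap L|$ never increases. A vertex of $S$ adds $1$ to $|S\cap L|$ and at most $1$ to $w(x)$; a vertex of another component adds nothing to $w(x)$.
\item Since $S\setminus\{v\}\subseteq N_G(v)$, the quantity $w(v)-|S\cap L|$ is exactly the number of numbered neighbours of $v$ outside $S$.
\end{itemize}

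Call a step a \emph{crossing} if MCS picks a vertex of $V\setminus S$ lying in a different component of $G-S$ from the previously picked vertex of $V\setminus S$.
\begin{itemize}
\item At the first crossing the chosen vertex lies in a fresh component, so its weight is at most $|S\cap L|$.
\item By induction, the same bound holds at every crossing. At the previous crossing every unnumbered vertex had weight at most $|S\cap L|$. Since then, only vertices of one other component and of $S$ have been numbered, so the monotonicity above applies.
\item The chosen vertex has maximum weight, so $w(v)\le |S\cap L|$ at every crossing. Hence no neighbour of $v$ outside $S$ is numbered before any crossing.
\end{itemize}
But $v$ has neighbours in two different components, $A$ and $B$, and all of them are numbered before $v$. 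So some crossing must occur after the first of $v$'s outside neighbours has been numbered, which is a contradiction.

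Without this crossing-by-crossing invariant, or an equivalent global argument, Step 3 remains unproved, and so does the proposition.
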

\begin{proof}
	To begin with, we show that $G_H$ contains no clique minimal separator. Suppose there exists a clique minimal separator $S$ in $H$ and a corresponding decomposition $(A, B, S)_{G}$ such that $A\cap H,B\cap H\neq \emptyset$. The following two cases are to be considered:
	
	(1) For $v\notin S$, without loss of generality, we assume $v\in A$. It is obvious that $N_G[v]\subseteq A\cup S$. Since $A\cup S$ is convex in $G$, which contradicts the fact that $H$ is the convex hull containing $N_G[v]$.
	
	(2) For $v\in S$, it is evident that $N_G(v)\cap A$ and $N_G(v)\cap B$ are both non-empty. Otherwise, it would result in the same contradiction. Without loss of generality, let's assume $a_1 = \underset{w\in V}{\arg\max}\ \alpha(w)$ and $a_1\in A$, and let $b_1 = \underset{w\in B}{\arg\max} \{ \alpha(w) < \alpha(a_1) \}$.  Before numbering $b_1$, the nodes in $N_G(v)\cap A$ are unnumbered. Otherwise, if they were numbered, $w_{b_1^-}(v)>w_{b_1^-}(B)\geq w_{b_1^-}(b_1)$, which contradicts choosing $b_1$ as the next node to be numbered. Furthermore, $w_{b_1^-}(v) = w_{b_1^-}(b_1) \geq w_{b_1^-}(A)$. Similarly, let $a_2 = \underset{w \in A}{\arg\max} \{ \alpha(w) < \alpha(b_1) \}
	$. Before numbering $a_2$, the nodes in $N_G(v)\cap B$ are unnumbered. If any of these nodes were numbered, it would lead to $w_{a_2^-}(v)>w_{a_2^-}(A)$, which contradicts choosing $a_2$ as the next node to be numbered. Moreover, as $a_2$ is selected for numbering in this step, it follows that $w_{a_2^-}(v) = w_{a_2^-}(a_2) \geq w_{a_2^-}(B)$. In other words, the nodes in $N_G(v)\cap A$ and $N_G(v)\cap B$ cannot receive numbering before $v$, leading to a contradiction.
	
	The fact that $G_H$ is the atom of $G$ is evident because for any connected subgraph $G_A\supsetneq G_H$, there exists a clique minimal separator $S^\prime \subseteq N_G(V\backslash H)$ in $G_A$.
\end{proof}

Proposition \ref{prop-1} shows that an atom of graph $G$ can be located with the aid of the node ordering returned by applying the MCS algorithm. In the following proposition, we will demonstrate that all atoms of $G$ can be successively acquired using the MCS ordering.

\begin{proposition}\label{prop-2}
	Suppose that $G=(V, E)$ is an undirected graph with an MCS ordering $\alpha$. Let $H_v$ denote the convex hull containing the smallest numbered  node $v$ and its neighborhood $N_G(v)$, and $V^\prime =  N_G[V \backslash H_v]$, $u =  \underset{x\in V^\prime}{\arg\min}\ \alpha(x)$. Then the convex hull $H_u$ containing $N_{G^\prime}[u]$ in $G^\prime$, is an atom of $G^\prime$, where $G^\prime =G_{V^\prime}$. Furthermore, if $H_u \nsubseteq H_v$, then $G^\prime_{H_u}$ is also an atom of $G$.
\end{proposition}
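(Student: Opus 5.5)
The proof has two parts. First we show that $\alpha$ restricted to $V'$ behaves like an MCS ordering of $G'$, so that Proposition~\ref{prop-1} applies to $G'$ with the node $u$. Then we show that atoms of $G'$ lying outside $H_v$ are atoms of $G$.

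\textbf{Setup.} Write $S=N_G(V\backslash H_v)\cap H_v$. Because $H_v$ is convex, $S$ is a union of the minimal separators that cut $H_v$ off from the components of $G_{V\backslash H_v}$. By Proposition~\ref{prop-1} and its proof, each of these separators is a clique minimal separator of $G$. So $V'=(V\backslash H_v)\cup S$, and every component $C$ of $G_{V\backslash H_v}$ attaches to $G'$ through a complete set $N_G(C)\subseteq S$.

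\textbf{Step 1: $\alpha|_{V'}$ may replace an MCS ordering of $G'$.} Proposition~\ref{prop-1} does not literally need $\alpha$ to be produced by MCS on the graph in question. Its proof only uses the local weight comparisons $w_{b_1^-}(v)\ge w_{b_1^-}(b_1)$ along the sides $A,B$ of a putative clique minimal separator containing the minimum node. I will re-run that argument for $G'$ and $u$. Take a clique minimal separator $T$ of $G'_{H_u}$ with $u\in T$ and decomposition $(A',B',T)_{G'}$, and lift it to a decomposition of $G$. Since $S$ is complete, $S$ lies entirely on one side together with $T$. Hence $H_v\backslash S$ can be attached to that side, and $T$ is also a clique minimal separator of $G$ with sides $A'\cup(H_v\backslash S)$ (say) and $B'$.

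Next I compare the weights in $G$ and $G'$ at the relevant times. Vertices of $H_v\backslash S$ are adjacent only to $H_v$. So their contribution to the weights of nodes of $V'$ is only to nodes of $S$, and these lie on the same side as $T$. The alternating argument of case (2), which finds $a_1,b_1,a_2,\dots$ and shows that the neighbours of $u$ on both sides stay unnumbered before $u$, therefore carries over verbatim. It yields the contradiction that $u$ would not have the smallest number among the relevant vertices. Case (1), with $u\notin T$, again contradicts minimality of the convex hull, exactly as before. Finally, maximality of $H_u$ in $G'$ follows from the closing sentence of Proposition~\ref{prop-1}'s proof applied within $G'$. This gives the first claim.

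\textbf{Step 2: lifting to $G$ when $H_u\nsubseteq H_v$.} An atom of $G'$ contains no clique minimal separator of $G'$. I would show that a clique minimal separator of $G_{H_u}$ relative to $G$ is also one relative to $G'$. Minimal separators of $G$ contained in $V'$, with full components meeting $V'$, remain minimal in $G'$, because $G'$ is obtained from $G$ by deleting $H_v\backslash S$, which hangs off the complete set $S$. For maximality in $G$, use the standard fact that the atoms of $G$ are the atoms of the pieces of a clique-minimal-separator decomposition. Here $(H_v\backslash S,\,V\backslash H_v,\,S)$ is such a decomposition. So the atoms of $G$ are $H_v$ together with the atoms of $G'$ not contained in $S$. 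The hypothesis $H_u\nsubseteq H_v$ guarantees $H_u\not\subseteq S$, so $G'_{H_u}$ is an atom of $G$.

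\textbf{Main obstacle.} Step 1 is the hardest part. The ordering $\alpha$ on $V'$ is not a genuine MCS run on $G'$: vertices of $H_v\backslash S$ are interleaved and inflate the weights of nodes in $S$. My plan is to handle this by the side-placement observation above, namely that all such inflation falls on a single side of $T$. If that fails, the fallback is to prove directly that $\alpha|_{V'}$ can be realised by some MCS execution on $G'$ with consistent tie-breaking.
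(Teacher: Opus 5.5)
Your proposal has a genuine gap. Both of your steps rest on the claim that $S=N_G(V\setminus H_v)$ is complete, and that claim is false.

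\textbf{The false claim.} Convexity of $H_v$ only makes each $N_G(C)$ complete, for $C$ a connected component of $G_{V\setminus H_v}$. The set $S$ is the union of these cliques and need not be a clique. For instance, let $G$ be a chordless $6$-cycle $c_0c_1\cdots c_5$ together with $x$ adjacent to $c_0,c_1$ and $y$ adjacent to $c_1,c_2$. MCS may pick $x,c_1,c_0,c_2,y,c_3,c_4,c_5$. Then $v=c_5$, $H_v$ is the cycle, and $S=\{c_0,c_1,c_2\}$ with $c_0c_2\notin E$. In $G'$, $\{c_1\}$ is a clique minimal separator with $c_0$ and $c_2$ on different sides, yet it separates nothing in $G$.

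\textbf{Consequences for Step~1.} You cannot lift a clique minimal separator $T$ of $G'$ to $G$ by attaching $H_v\setminus S$ to one side. Your assertion that the weight coming from $H_v\setminus S$ lands on a single side of $T$ therefore has no basis. Your fallback is also false: $\alpha|_{V'}$ need not be an MCS ordering of $G'$ under any tie-breaking. Take a $6$-cycle with a path $p'$--$p$--$c_0$ and a pendant $q$ at $c_3$. MCS may pick $p,c_0,c_1,c_2,c_3,q,p',c_4,c_5$, so $V'=\{p,p',c_0,c_3,q\}$. In $G'$, after $p,c_0$, the vertex $p'$ has weight $1$ while $c_3$ has weight $0$, so $c_3$ cannot come third.

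\textbf{Consequences for Step~2.} The triple $(H_v\setminus S,V\setminus H_v,S)$ is not a clique-minimal-separator decomposition in general. This part is easy to repair: decompose along each $N_G(C)$ separately. Alternatively, use the paper's argument: an atom $H_1$ of $G$ with $H_u\subsetneq H_1$ would lie in $V'$ and so be an atom of $G'$. That argument never needs $S$ complete. For the first claim, however, the paper only says ``by similar analysis'', so it does not close this gap either.

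\textbf{The missing idea.} Delete the vertices numbered \emph{after} $u$, rather than $H_v\setminus S$. Let $R=\{x:\alpha(x)<\alpha(u)\}$. Since $u$ has the smallest label in $V'$, $R\subseteq H_v\setminus S$. These vertices are still unnumbered when $u$ is chosen, so they contribute no weight. Hence $\alpha|_{V\setminus R}$ is a genuine MCS ordering of $G-R$ ending at $u$ (restrict to $u$'s component if necessary), and Proposition~\ref{prop-1} applies to it.

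\textbf{Transferring back to $G'$.} Use only the cliques $N_G(C)$, in three moves.
\begin{enumerate}
\item Suppose $u\in N_G(C)$ and $u$ had a neighbour $z$ in $G-R$ outside $W:=C\cup N_G(C)$. Pick a neighbour $c\in C$ of $u$. A minimal $cz$-separator of $G-R$ inside $N_G(C)$ is a clique containing $u$, and it splits $N_{G-R}[u]$. This contradicts Proposition~\ref{prop-1}.
\item Hence $u$ lies in at most one $N_G(C)$. In every case, whether $u\in C$ or $u\in N_G(C)$, we get $N_{G'}[u]\subseteq W$ for a single component $C$.
\item Now let $T$ be a clique minimal separator of $G'$ with $u\in T$, and let $a,b$ be $G'$-neighbours of $u$ on different sides. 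Any $a$--$b$ path in $(G-R)-T$ that leaves $W$ exits and re-enters through the clique $N_G(C)$. It can therefore be shortcut inside $W\subseteq V'$, so $T$ separates $a,b$ in $G-R$ too. A minimal $ab$-separator contained in $T$ is then a clique minimal separator of $G-R$ through $u$ that splits $N_{G-R}[u]$, again contradicting Proposition~\ref{prop-1}.
\end{enumerate}
Together with your case~(1) and the maximality argument, this proves the first claim.
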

\begin{proof}
	Suppose there exists a clique minimal separator $S$ in $H_u$ and a corresponding decomposition $(A, B, S)_{G^\prime}$ such that $A\cap H_u,B\cap H_u\neq \emptyset$. By similar analysis, we have that $G^\prime_{H_u}$ is an atom of $G^\prime$.
	
	If $H_u \nsubseteq H_v$, we prove that $G^\prime_{H_u}$ is also an atom of $G$. Otherwise, there exists an atom $G_{H_1}$ in $G$ such that $H_u\subsetneq H_1$, and it is evident that $H_1 \subseteq V^\prime$. Thus  $G_{H_1}$ is also an atom in $G^\prime$ due to the hereditary property of atoms. This contradicts the fact that $G^\prime_{H_u}$ is an atom in $G^\prime$. 
\end{proof}

The propositions \ref{prop-1} and \ref{prop-2} serve as the theoretical foundation for our recursive decomposition algorithm, designed to identify all atoms within a graph. The algorithm operates as follows:

\begin{itemize}
    \item Initially, obtain a node ordering of $G$ using the MCS algorithm.
    \item Next, choose the node $v$ with the smallest number from the remaining set of nodes $V'$ (as defined in Proposition~\ref{prop-2}).
    \item Then, utilize the CMSA algorithm to determine the convex hull containing $v$ and its neighbors.
    \item Repeat the second and third steps until all nodes have been visited.
\end{itemize}

For the pseudocode of this algorithm, refer to Algorithm~\ref{alg-2} for a detailed explanation.

\begin{theorem}
	The RDA algorithm precisely returns all atoms of the graph $G$.
\end{theorem}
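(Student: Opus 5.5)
I would prove the theorem by induction on the number of iterations of the RDA loop. The loop maintains a current vertex set $V^{(k)}$, starting with $V^{(0)}=V$, and $V^{(k+1)}=N_{G^{(k)}}[V^{(k)}\backslash H^{(k)}]$, where $G^{(k)}=G_{V^{(k)}}$ and $H^{(k)}$ is the convex hull of $N_{G^{(k)}}[u_k]$ in $G^{(k)}$. Here $u_k$ is the vertex of $V^{(k)}$ with the smallest MCS label. The invariant I would maintain has two parts. First, every set output so far induces an atom of $G$. Second, every atom of $G$ not yet output is contained in $V^{(k)}$ and is an atom of $G^{(k)}$. At the end, when every vertex has been visited, the second part forces all atoms to have been output. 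So the theorem reduces to establishing this invariant and showing termination.

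\emph{Soundness.} I would apply Proposition~\ref{prop-1} with $G^{(0)}=G$ for the first hull, and Proposition~\ref{prop-2} for each later step. I read Proposition~\ref{prop-2} as applying to $G^{(k)}$ with the restricted MCS ordering: it gives that $H^{(k+1)}$ is an atom of $G^{(k+1)}$, and that it is an atom of $G^{(k)}$ whenever it is not contained in the previous hull. Chaining these "atom of the subgraph $\Rightarrow$ atom of the parent" steps back to $G$ needs a lemma I would prove separately.

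\emph{Lemma.} If $S=N_{G^{(k)}}(V^{(k)}\backslash H^{(k)})\cap H^{(k)}$ is a union of clique minimal separators of $G^{(k)}$, then every atom of $G^{(k+1)}$ not contained in $H^{(k)}$ is an atom of $G^{(k)}$. The same lemma gives the converse needed for the second part of the invariant: every atom of $G^{(k)}$ other than $H^{(k)}$ lies in $V^{(k+1)}$ and remains an atom there. This is the standard fact that atoms are preserved by clique minimal separator decomposition, since each atom lies entirely on one side of such a separator together with it.

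The RDA must output only the hulls with $H^{(k)}\nsubseteq$ earlier hulls, discarding duplicates. Hence no non-atom is output, and each atom is output at most once.

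\emph{Completeness and termination.} Each iteration removes $H^{(k)}\backslash N(V^{(k)}\backslash H^{(k)})\ni u_k$, which is nonempty. Indeed $u_k$ is not in $N[V^{(k)}\backslash H^{(k)}]$ because $N[u_k]\subseteq H^{(k)}$. So $|V^{(k)}|$ strictly decreases and the loop ends. At that point the invariant says that no atom of $G$ remains unoutput.

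\emph{Main obstacle.} I expect the hardest step to be justifying that the boundary $N_{G^{(k)}}(V^{(k)}\backslash H^{(k)})$ splits into clique minimal separators. My first attempt would use convexity of $H^{(k)}$. For each component $C$ of $G^{(k)}-H^{(k)}$, convexity forces $N(C)$ to be complete: otherwise two nonadjacent vertices of $N(C)\subseteq H^{(k)}$ would be joined by a path with interior in $C$. Moreover, $N(C)$ is a minimal separator, with $C$ and the component containing $u_k$ as full components. With this, the lemma follows from the standard hereditary properties of clique minimal separator decompositions cited from \cite{berry2010introduction}.
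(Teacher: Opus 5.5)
Your proof is the paper's proof. The paper disposes of the theorem by saying it follows by induction from Propositions~\ref{prop-1} and~\ref{prop-2}, and your two-part invariant, the lemma that atoms survive the cut along the clique boundaries $N(C)$, and the termination count are exactly the details that sentence leaves out. (Small fix: completeness of $N(C)$ does come from convexity, but the claim that $C$ and the component of $u_k$ are full components of $N(C)$ needs the minimality of the hull, not convexity alone.)

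The step you share with the paper without justification is invoking Proposition~\ref{prop-2} at every iteration. Its hypothesis is an MCS ordering of the current graph, and $\alpha|_{V^{(k)}}$ need not be an MCS ordering of $G^{(k)}$: a boundary vertex of an earlier hull may have been numbered early on the strength of weight from interior vertices that have since been removed. So a fully rigorous version must re-prove Proposition~\ref{prop-2} for the restricted ordering at each stage.
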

\begin{proof}
	By employing the argument of induction, it follows directly from Propositions \ref{prop-1} and \ref{prop-2}. For brevity, we omit the proof here.
\end{proof}

\begin{algorithm}[H]
	\caption{Recursive Decomposition Algorithm (RDA)}
	\label{alg-2}
	\begin{algorithmic}[1]
		\REQUIRE An undirected graph $G=(V,E)$ and its MCS ordering $\alpha$.
		\ENSURE All atoms of $G$.
		\STATE $\mathbf{procedure}~~\mathrm{RDA}(G,\alpha)$
		\STATE Initialize: $\mathcal{U}_G = \emptyset$;
		\REPEAT
		\STATE $v = \underset{w\in V}{\arg\min}\ \alpha(w)$;
		\STATE $H = \mathrm{CMSA}(G, N_G[v])$;
		\STATE Update $V$ to $N_G[V\backslash H]$, and $G$ to $G_V$;
		\IF{$H\nsubseteq U$ for all $U\in \mathcal{U}_G$}
		\STATE Add $H$ to $\mathcal{U}_G$;
		\ENDIF
		\UNTIL{$V=\emptyset$};
		\RETURN $\mathcal{U}_G$
	\end{algorithmic}
\end{algorithm}

\begin{proposition}
	The complexity of the RDA algorithm is at most $O(nm)$, where $n$ is the number of nodes, and $m$ is the number of edges in $G$.
\end{proposition}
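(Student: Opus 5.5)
We bound the cost of Algorithm~\ref{alg-2} by charging each round of the \textbf{repeat} loop to the atom it produces, and then using a counting fact about atoms. The MCS ordering $\alpha$ is computed once, before the loop, in $O(n+m)$ time with the usual bucket implementation of Algorithm~\ref{alg-1}. Inside the loop we do not recompute $\alpha$; we use its restriction to the shrinking vertex set. This is justified because Propositions~\ref{prop-1} and~\ref{prop-2} only need the smallest-numbered remaining vertex.

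Next we bound the number of iterations. In each round, $V$ is replaced by $N_G[V\backslash H]$. Since $G_H$ is an atom of the current graph and $H\neq N_G[V\backslash H]$ whenever the round makes progress, at least the vertex $v$ disappears from the remaining set. We claim something stronger: every vertex of $H\setminus N_G(V\setminus H)$ is removed, and this set contains $v$. The minimum-labelled vertex cannot lie in $N(V\backslash H)$, because otherwise some separator argument from Proposition~\ref{prop-1} would be violated. Hence there are at most $n$ rounds. By Proposition~\ref{prop-2} each non-redundant round yields a distinct atom, and it is classical (Leimer; Berry et al.) that a graph has at most $n-1$ clique minimal separators and at most $n$ atoms. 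So the loop runs $O(n)$ times.

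We then bound the cost of a single round by $O(m)$. Selecting $v$ takes $O(n)$ time if we scan the ordering with a pointer, and the scan is in fact amortized $O(n)$ overall since vertices only leave $V$. Updating $V$ to $N_G[V\backslash H]$ and forming $G_V$ takes $O(n+m)$. The subset test on line~7 can be done by marking: compare $|H|$ against the stored atoms using vertex membership arrays. Naively this costs $O(\sum|U|)$, which is too much, so we replace it by the observation that $H\subseteq U$ can only happen for the atom computed in the previous round. This gives $O(n)$ per round. The dominant step is the call $\mathrm{CMSA}(G,N_G[v])$. We invoke the known complexity of CMSA from \cite{heng2024algorithms}: it computes the convex hull in time linear in the size of the graph it works on, $O(n+m)$. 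If that reference only gives $O(|H|\cdot m)$, we would instead charge CMSA's work to the edges incident to $H$, which are either deleted or become separator edges.

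The main obstacle is this last point. The claimed bound holds only if each CMSA call costs $O(m)$, or if its costs sum to $O(nm)$ over all rounds. We would first verify from \cite{heng2024algorithms} that one CMSA run consists of a constant number of graph searches per absorbed minimal separator, with total work $O(n+m)$. Failing that, we would use an amortization: each vertex is absorbed as an interior vertex of at most one hull before it is deleted. Multiplying $O(n)$ rounds by $O(n+m)=O(m)$ for connected $G$ then gives the $O(nm)$ bound.
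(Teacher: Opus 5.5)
Your bound is ``at most $n$ rounds, each costing $O(m)$''. It therefore rests entirely on a single call $\mathrm{CMSA}(G,N_G[v])$ costing $O(n+m)$, and that bound is not available. The paper itself charges one CMSA call $O(nm)$: its proof says that in the worst case there is a single atom, found by one CMSA run in $O(nm)$ time. Fed into your round-by-round accounting, this gives $O(n)\cdot O(nm)=O(n^2m)$, not $O(nm)$.

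You flag this as the main obstacle, but the fallback does not resolve it. The fact that interior vertices of $H$ are deleted after their round does not bound the number of separator absorptions a round performs. Separator vertices survive and can be absorbed again in later rounds. For example, for many $4$-cycles glued at one common vertex, an MCS ordering started at that vertex makes it be absorbed in every round. Your closing sentence then multiplies $O(n)$ rounds by $O(m)$ per round again, so it re-assumes the linear per-call bound it was meant to replace.

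The missing ingredient is a bound on the \emph{total} CMSA work over all rounds, for instance that the number of absorption steps summed over the whole run is $O(n)$. This is the route the paper takes: it declares the single-atom case, handled by one full CMSA run, to be the worst case. So it bounds the aggregate cost directly rather than multiplying a per-call bound by the number of rounds. The paper states this without further justification, but it is exactly the step your argument would have to prove.

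Two smaller points. First, your shortcut for line~7 is false. On the graph of Figure~\ref{fig-2} with the given ordering:
\begin{itemize}
\item round 1 returns $\{b,c,e,f\}$;
\item round 2 returns $\{c,d\}$ in $G_{\{a,b,c,d\}}$;
\item round 3 returns $\{b,c\}$ in $G_{\{a,b,c\}}$.
\end{itemize}
The round-3 set lies inside the round-1 atom but not inside the round-2 one. The containment test must therefore look further back, and its cost needs its own bound. Second, $v$ leaves the vertex set simply because $N_G[v]\subseteq H$ gives $v\notin N_G(V\backslash H)$; no separator argument from Proposition~\ref{prop-1} is needed.
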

\begin{proof}
	In the worst-case scenario, the algorithm only has one atom, which can be found using the CMSA algorithm in $O(nm)$ time.
\end{proof}

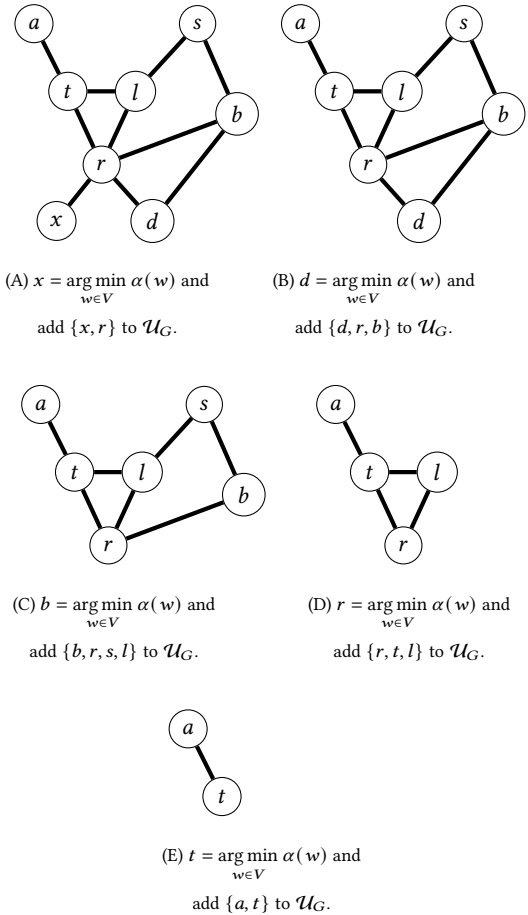
\begin{figure}[ht]
	\vskip 0.08in
	\begin{center}
		\begin{tikzpicture}[scale=0.75]
			\node[shape=circle, draw=black] (1) at (0.2,0){$a$};
			\node[shape=circle, draw=black] (2) at (0.8,-1.2){$t$};
			\node[shape=circle, draw=black] (3) at (2,-1.2){$l$};
			\node[shape=circle, draw=black] (4) at (3.1,0){$s$};
			\node[shape=circle, draw=black] (5) at (3.8,-1.6){$b$};
			\node[shape=circle, draw=black] (6) at (1.4,-2.5){$r$};
			\node[shape=circle, draw=black] (7) at (0.6,-3.5){$x$};
			\node[shape=circle, draw=black] (8) at (2.3,-3.5){$d$};
			\draw[-,ultra thick] (1)--(2);
			\draw[-,ultra thick] (2)--(3);
			\draw[-,ultra thick] (2)--(6);
			\draw[-,ultra thick] (3)--(4);
			\draw[-,ultra thick] (3)--(6);
			\draw[-,ultra thick] (6)--(7);
			\draw[-,ultra thick] (8)--(6);	
			\draw[-,ultra thick] (5)--(8);	
			\draw[-,ultra thick] (6)--(5);	
			\draw[-,ultra thick] (4)--(5);
			\node at (1.5,-4.7){\footnotesize (A) $x = \underset{w\in V}{\arg\min}\ \alpha(w)$ and };
			\node at (1.5,-5.4){\footnotesize add $\{x,r\}$ to $\mathcal{U}_G$.};	
		\end{tikzpicture}
        \vspace{0.5cm}
		\begin{tikzpicture}[scale=0.75]
			\node[shape=circle, draw=black] (1) at (0.2,0){$a$};
			\node[shape=circle, draw=black] (2) at (0.8,-1.2){$t$};
			\node[shape=circle, draw=black] (3) at (2,-1.2){$l$};
			\node[shape=circle, draw=black] (4) at (3.1,0){$s$};
			\node[shape=circle, draw=black] (5) at (3.8,-1.6){$b$};
			\node[shape=circle, draw=black] (6) at (1.4,-2.5){$r$};
			\node[shape=circle, draw=black] (8) at (2.3,-3.5){$d$};
			\draw[-,ultra thick] (1)--(2);
			\draw[-,ultra thick] (2)--(3);
			\draw[-,ultra thick] (2)--(6);
			\draw[-,ultra thick] (3)--(4);
			\draw[-,ultra thick] (3)--(6);
			\draw[-,ultra thick] (8)--(6);	
			\draw[-,ultra thick] (5)--(8);	
			\draw[-,ultra thick] (6)--(5);	
			\draw[-,ultra thick] (4)--(5);	
			\node at (1.5,-4.7){\footnotesize (B) $d = \underset{w\in V}{\arg\min}\ \alpha(w)$ and };
			\node at (1.5,-5.4){\footnotesize add $\{d,r,b\}$ to $\mathcal{U}_G$.};			
		\end{tikzpicture}
		\vspace{0.5cm}
		\begin{tikzpicture}[scale=0.75]
			\node[shape=circle, draw=black] (1) at (0.2,0){$a$};
			\node[shape=circle, draw=black] (2) at (0.8,-1.2){$t$};
			\node[shape=circle, draw=black] (3) at (2,-1.2){$l$};
			\node[shape=circle, draw=black] (4) at (3.1,0){$s$};
			\node[shape=circle, draw=black] (5) at (3.8,-1.6){$b$};
			\node[shape=circle, draw=black] (6) at (1.4,-2.5){$r$};
			\draw[-,ultra thick] (1)--(2);
			\draw[-,ultra thick] (2)--(3);
			\draw[-,ultra thick] (2)--(6);
			\draw[-,ultra thick] (3)--(4);
			\draw[-,ultra thick] (3)--(6);
			\draw[-,ultra thick] (6)--(5);	
			\draw[-,ultra thick] (4)--(5);	
			\node at (1.5,-3.7){\footnotesize (C) $b = \underset{w\in V}{\arg\min}\ \alpha(w)$ and };
			\node at (1.5,-4.4){\footnotesize add $\{b,r,s,l\}$ to $\mathcal{U}_G$.};			
		\end{tikzpicture}		
		\hspace{0.3cm}
        \begin{tikzpicture}[scale=0.75]
			\node[shape=circle, draw=black] (1) at (0.2,0){$a$};
			\node[shape=circle, draw=black] (2) at (0.8,-1.2){$t$};
			\node[shape=circle, draw=black] (3) at (2,-1.2){$l$};
			\node[shape=circle, draw=black] (6) at (1.4,-2.5){$r$};
			\draw[-,ultra thick] (1)--(2);
			\draw[-,ultra thick] (2)--(3);
			\draw[-,ultra thick] (2)--(6);
			\draw[-,ultra thick] (3)--(6);
			\node at (1.5,-3.7){\footnotesize (D) $r = \underset{w\in V}{\arg\min}\ \alpha(w)$ and };
			\node at (1.5,-4.4){\footnotesize add $\{r,t,l\}$ to $\mathcal{U}_G$.};			
		\end{tikzpicture}
		\hspace{0.3cm}
		\begin{tikzpicture}[scale=0.75]
			\node[shape=circle, draw=black] (1) at (0.2,0){$a$};
			\node[shape=circle, draw=black] (2) at (0.8,-1.2){$t$};
			\draw[-,ultra thick] (1)--(2);
			\node at (1.5,-2.4){\footnotesize (E) $t = \underset{w\in V}{\arg\min}\ \alpha(w)$ and };
			\node at (1.5,-3.1){\footnotesize add $\{a,t\}$ to $\mathcal{U}_G$.};			
		\end{tikzpicture}
		\caption{The process of decomposing the undirected graph $G$ by applying the RDA algorithm. }
		\label{fig-1}
	\end{center}
	\vskip -0.2in
\end{figure}

To give an illustration of the RDA algorithm, we provide an example to clarify its execution process.

\begin{example}\label{exm-1}	
 Let $G$ be an undirected graph shown in Figure~\ref{fig-1}(A). Applying the MCS algorithm yields the ordering $\alpha=\{x,d,b,s,r,l,t,\\a\}$. With the node ordering in hand, we apply the algorithm RDA to $G$, and the process of recursive decomposition is listed as follows:
\begin{enumerate}[(i)]
    \item First Iteration:
    \begin{itemize}
        \item $x=\underset{w \in V}{\arg \min } \alpha(w)$.
\item Convex hull $H$ containing $N_G[x]=\{x,r\}$ is $\{x,r\}$.
\item Update: $V=V \backslash\{x\}, G=G_V$.
\item  Add $\{x,r\}$ to $\mathcal{U}_G$.
    \end{itemize}
    \item Second Iteration:
    \begin{itemize}
        \item $d=\underset{w \in V}{\arg \min } \alpha(w)$.
        \item Convex hull $H$ containing $N_G[d]=\{d,r,b\}$ is $\{d,r,b\}$.
        \item  Update: $V=V \backslash\{d\}, G=G_V$.
       \item  Add $\{d,r,b\}$ to $\mathcal{U}_G$.
    \end{itemize}
\item Third Iteration:
\begin{itemize}
    \item $b=\underset{w \in V}{\arg \min } \alpha(w)$.
    \item Convex hull $H$ containing $N_G[b]=\{b,r,s\}$ is $\{b,r,s, l\}$.
\item Update: $V=V \backslash\{b,s\}, G=G_V$.
\item  Add $\{b,r,s, l\}$ to $\mathcal{U}_G$.
\end{itemize}
\item Fourth Iteration:
\begin{itemize}
    \item $r=\underset{w \in V}{\arg \min } \alpha(w)$.
    \item Convex hull $H$ containing $N_G[r]=\{r,t,l\}$ is $\{r,t,l\}$.
    \item  Update: $V=V \backslash\{r,l\}, G=G_V$.
    \item Add $\{r,t,l\}$ to $\mathcal{U}_G$.
\end{itemize}
\item Fifth Iteration:
\begin{itemize}
    \item $t=\underset{w \in V}{\arg \min } \alpha(w)$.
    \item Convex hull $H$ containing $N_G[t]=\{a,t\}$ is $\{a,t\}$.
    \item Update: $V=\emptyset$.
    \item  Add $\{a,t\}$ to $\mathcal{U}_G$.
    \end{itemize}
\end{enumerate}
After the above iterations, the RDA algorithm outputs $\mathcal{U}_G$, which consists of the sets added in each iteration, i.e.,
$$
\mathcal{U}_G=\{\{x,r\},\{d,r,b\},\{b,r,s, l\},\{r,t,l\},\{a,t\}\}.
$$

\end{example}

From Example~\ref{exm-1}, it is not difficult to see that the proposed RDA can be used to find clique minimal separators. We summarize this fact as a corollary.

\begin{corollary}\label{cor-lem-1}
	Suppose that $G=(V, E)$ is an undirected graph and $G_H$ is an atom of $G$. Then $N_G(M)$ is a clique minimal separator of $G$, where $M$ is a connected component of $G_{V\backslash H}$.
\end{corollary}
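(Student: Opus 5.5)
We need to prove that for an atom $H$ and a component $M$ of $G_{V\backslash H}$, the set $S:=N_G(M)$ is a clique minimal separator. The plan is to show three things in turn: $S\subseteq H$; $S$ is a minimal separator with two full components; and $S$ is complete. Since $M$ is a connected component of $G_{V\backslash H}$, every neighbor of $M$ outside $M$ lies in $H$, so $S\subseteq H$. Because $G$ is connected and $M\neq V$, we have $S\neq\emptyset$. Also $M$ is a connected component of $G_{V\backslash S}$ with $N_G(M)=S$, so $M$ is already one full component of $S$.

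\textbf{Completeness via convexity.} Recall that every atom is convex. If $H$ were not convex, there would be non-adjacent $u,w\in H$ joined by a path with interior outside $H$, and the standard argument shows this contradicts maximality of $H$ among subgraphs without clique minimal separators; I would take this from the convex-hull characterization used in Proposition~\ref{prop-1}. Now take non-adjacent $x,y\in S$. Each has a neighbor in $M$, and $M$ is connected, so there is a path $x,m_1,\dots,m_k,y$ with all interior vertices in $M\subseteq V\backslash H$. This contradicts convexity of $H$. Hence $S$ is complete.

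\textbf{Second full component.} Let $C$ be the connected component of $G_{V\backslash S}$ containing $H\backslash S$, after checking that $H\backslash S$ is nonempty and connected in $G_{V\backslash S}$. Nonemptiness holds because otherwise $H\subseteq S$ is a clique; then $H$ is a clique atom properly contained in the clique $S\cup\{m\}$ for any $m\in M$ adjacent to all of $S$, or we argue more carefully. Connectivity holds because a complete $S$ separating $H\backslash S$ inside $G_H$ would yield a clique separator of $G_H$, and hence a clique minimal separator contained in it, contradicting that $G_H$ is an atom. It remains to show $N_G(C)=S$, i.e.\ every $s\in S$ has a neighbor in $H\backslash S$. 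Suppose $s$ has none. Then $N(s)\cap H\subseteq S\backslash\{s\}$, and I expect to derive a clique separator of $G_H$ separating $s$ from $H\backslash S$, namely $S\backslash\{s\}$. Since $s\in H$ and $H$ has no clique minimal separator, this forces $H\subseteq S$, reducing to the degenerate case.

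\textbf{Main obstacle.} The main obstacle is the degenerate case where $H$ is a clique contained in $N(M)$. Here maximality of the atom must be used. If every $s\in S$ has a neighbor in $M$ and $H=S$ is a clique, then I would show that $G_{H\cup\{m\}}$ or a larger connected set $H\cup M'$ has no clique minimal separator, contradicting maximality. Alternatively, I would show that in this case $H$ cannot be an atom because $S$ fails to separate. I would handle this by choosing $m\in M$ maximizing $|N(m)\cap S|$ and arguing that $H\cup\{m\}$ still has no clique minimal separator, which gives the contradiction.
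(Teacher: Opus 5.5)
Your skeleton is the paper's: its one-line proof takes $x\in H\backslash N_G(M)$ and $y\in M$, i.e.\ the full components $M$ and the component containing $H\backslash S$. Your steps $S\subseteq H$, connectivity of $H\backslash S$, and ``every $s\in S$ has a neighbour in $H\backslash S$ unless $H\subseteq S$'' are correct once $S$ is known to be complete. However, the two load-bearing claims are not established.

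\textbf{Completeness.} ``Every atom is convex'' cannot be taken from Proposition~\ref{prop-1}, which shows that a particular convex hull is an atom, not that atoms are convex. Moreover, convexity of $H$ is equivalent to $N_G(M')$ being complete for every component $M'$ of $G_{V\backslash H}$. Citing it therefore assumes the completeness half of the corollary. You need an actual argument. For non-adjacent $x,y\in S$, take a shortest $x$--$y$ path $P$ in $G_{M\cup\{x,y\}}$; it is chordless. Then show $G_{H\cup V^o(P)}$ has no clique minimal separator $T$:
\begin{itemize}
\item $H\backslash T\neq\emptyset$, since $x,y$ are not both in $T$, and it is connected because $G_H$ is an atom.
\item Any other component of $G_{H\cup V^o(P)}-T$ is a segment $p_i\dots p_j$ of $P$. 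Its flanking vertices $p_{i-1},p_{j+1}$ lie in the clique $T$, which gives a chord of $P$ or the edge $xy$.
\end{itemize}
So $T$ does not separate, and the connected set $H\cup V^o(P)\supsetneq H$ contradicts maximality of $H$.

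\textbf{The degenerate case $H=S$ (a clique).} Your proposed fix fails. If $m\in M$ is not adjacent to all of $S$, then $N(m)\cap S$ is a clique minimal separator of $G_{H\cup\{m\}}$, with full components $\{m\}$ and $S\backslash N(m)$. Choosing $m$ to maximize $|N(m)\cap S|$ does not help, since nothing guarantees that some vertex of $M$ sees all of $S$. The same unjustified assumption appears in your nonemptiness argument. What works is to take an inclusion-minimal connected $X\subseteq M$ with $S\subseteq N_G(X)$, and show $G_{S\cup X}$ has no clique minimal separator $T$:
\begin{itemize}
\item Since $S$ is a clique, some full component $D$ of $T$ avoids $S$, so $D\subseteq X$.
\item If $T\cap X=\emptyset$, then $D=X$ and $T=S$, which does not separate $G_{S\cup X}$.
\item Otherwise $X\backslash D$ is nonempty and connected (shortcut through the clique $T$) and strictly smaller than $X$. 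By minimality some $h\in S$ has all its $X$-neighbours in $D$. Hence $h\in N(D)=T$, so $h$ is adjacent to every vertex of $T\cap X\subseteq X\backslash D$, a contradiction.
\end{itemize}
Then $S\cup X$ is a connected set strictly containing $H$ with no clique minimal separator, contradicting maximality. The paper's own ``it is obvious'' also silently assumes both completeness and $H\backslash N_G(M)\neq\emptyset$. These are exactly the points where your proof must do real work.
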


\begin{proof}
	It is obvious that $N_G(M)$ is a minimal $xy$-separator and is complete for any $x\in H\backslash N_G(M), y\in M$.
\end{proof}
Corollary~\ref{cor-lem-1} indicates that the RDA algorithm can be used to identify the clique minimal separators of the graph at each iteration. In previous research, clique minimal separators were primarily used to discover the atoms of the graph. However, since we have already obtained all atoms of the graph, this paper omits the step of searching for clique minimal separators.

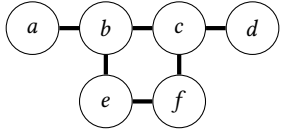
\begin{figure}[ht]
	\vskip 0.08in
	\begin{center}
		\begin{tikzpicture}[scale=0.75, every node/.style={minimum size=0.7cm}]
			\node[shape=circle, draw=black] (1) at (0,0){$a$};
			\node[shape=circle, draw=black] (2) at (1.3,0){$b$};
			\node[shape=circle, draw=black] (3) at (2.6,0){$c$};
			\node[shape=circle, draw=black] (4) at (3.9,0){$d$};
			\node[shape=circle, draw=black] (5) at (1.3,-1.3){$e$};
			\node[shape=circle, draw=black] (6) at (2.6,-1.3){$f$};
			\draw[-,ultra thick] (1)--(2);
			\draw[-,ultra thick] (2)--(3);
			\draw[-,ultra thick] (3)--(4);
			\draw[-,ultra thick] (2)--(5);
			\draw[-,ultra thick] (5)--(6);
			\draw[-,ultra thick] (6)--(3);				
		\end{tikzpicture}		
		\caption{The undirected graph $G$ and its MCS ordering $\alpha=\{e,f,d,c,b, a\}$}
		\label{fig-2}
	\end{center}
	\vskip -0.2in
\end{figure}

Taking into account the undirected graph $G$ in ~Figure~\ref{fig-2}, where $H = \{b,c,e,f\}$ is the first atom identified by the RDA algorithm. We can see that RDA can recursively search for atoms in the subgraphs $G_{\{a,b\}}$ and $G_{\{c,d\}}$ in the second iteration, respectively. Therefore, we propose the following parallel version of the RDA algorithm:

\begin{algorithm}[H]
	\caption{Parallel Recursive Decomposition Algorithm. (PRDA)}
	\label{alg-3}
	\begin{algorithmic}[1]
		\REQUIRE An undirected graph $G=(V,E)$ and its MCS ordering $\alpha$.
		\ENSURE All atoms of $G$.
		\STATE $\mathbf{procedure}~~\mathrm{PRDA}(G,\alpha)$;
		\STATE Initialize: $\mathcal{U}_G = \emptyset $;
		\STATE $v = \underset{w\in V}{\arg\min}\ \alpha(w)$;
		\STATE $H = CMSA(G, N_G[v])$;
		\STATE Add $H$ to $\mathcal{U}_G$;
		\STATE \textbf{for} each connected component $M_i$ of $G_{V\backslash H}$ \textbf{do in parallel}
		\STATE ~~~~~Update $V_i$ to $N_G[M_i]$;
		\STATE ~~~~~$\mathrm{PRDA}(G_{V_i},\alpha_{V_i})$;
		\RETURN $\mathcal{U}_G$
	\end{algorithmic}
\end{algorithm}

The PRDA algorithm may not always be able to perform parallel searches. For example, in ~Figure~\ref{fig-1}, subgraph $G_{V\backslash H}$ always has one connected component. This means that, in practice, the PRDA algorithm is not always more efficient than RDA, and sometimes they are even less efficient because of the increased cost of parallelism.

\section{Experiments}\label{sec-5}

We compared the performance of the RAD algorithm with traditional decomposition algorithms in a Python environment. All our experiments were run on a system with an Intel(R) Xeon(R) Silver 4215R CPU @ 3.20GHz (3.19GHz base, 2 processors) and 128GB of RAM.

In general, a random graph does not contain clique minimal separators, which means that the graph itself is an atom. Therefore, we compared the efficiency of the RDA algorithm with the Xu and Guo algorithm \cite{xu2012new} in real networks\footnote{These networks are sourced from repositories https://snap. stanford.edu/data/index.html and https://networkrepository.com/.}. The algorithm of Xu and Guo combines Leimer's algorithm with the MCS-M algorithm \cite{berry2004maximum}, which has been experimentally demonstrated to be efficient for decomposition. The experimental setup is as follows: 

\begin{itemize}
	\item We apply both algorithms to decompose each real network and record the execution time for each run.
	\item  This process is repeated 20 times to compute the average runtime for each algorithm.
\end{itemize}
\begin{table}[]
	\caption{The average runtime (in seconds) for the two decomposition algorithms.}
	\label{tab-1}
		\begin{tabular}{ccccc}
			\hline
			Network                        & nodes     & edges    & $T_1$     & $T_2$ \\ \hline
			Animal Network-1 			& 103   & 151    & 0.03   & 0.15       \\
			Animal Network-2            & 445   & 1423   & 0.14   & 0.94       \\
			bio-CE-GT                      & 924   & 3239   & 0.5    & 9.75       \\
			bio-CE-GN                      & 2200  & 53683  & 1.11   & 193.43     \\
			bio-DR-CX                      & 3289  & 84940  & 2.09   & 892.38     \\
			as20000102                     & 6474  & 13895  & 20.34  & ---      \\
			CA-HepTh                       & 9877  & 25998  & 184.99 & ---       \\
			CA-CondMat                     & 23133 & 93497  & 857.99 & ---          \\ 
			Email-Enron                    & 36692 & 183831 & 859.77 & ---        \\ \hline
		\end{tabular}%
\end{table}
Table \ref{tab-1} displays the average computation times for two algorithms: $T_1$ for the RDA algorithm and $T_2$
for the algorithm proposed by Xu and Guo \cite{xu2012new}. Entries marked with `---' indicate that the decomposition time of the algorithm exceeded acceptable limits, resulting in the exclusion of those results. The experiments demonstrate that the RDA algorithm significantly outperforms the traditional algorithm in terms of efficiency.

\section{Applications}\label{sec-4}

Research suggests that solving the problem can be tackled independently in components, which are then integrated to achieve a global solution. However, the most effective current method for decomposing a graph into components involves extracting clique minimal separators from minimal triangulations. This can result in decomposition times that rival or exceed direct computation costs. Specifically, the time required for minimal clique decomposition algorithms in high-dimensional graphs is substantial. As a result, previous decomposition efforts have primarily remained theoretical, lacking practical tools for user-friendly divide-and-conquer approaches.

Our work addresses several key challenges:

Each component undergoes an independent minimal (or minimum) triangulation, with the resultant edge sets combined to form the minimal (or minimum) triangulation of G. Existing clique separator decomposition algorithms are often unsuitable for directly constructing this triangulation. Our approach integrates minimal triangulation during the decomposition process, allowing previously identified components to be triangulated while searching for subsequent ones, thereby enhancing efficiency. Similar methods can be adapted for identifying maximal cliques, determining treewidth, and graph coloring.

The maximum prime subgraph, as proposed by Olesen and Madsen \cite{olesen2002maximal}, plays a vital role in the propagation of the Bayesian network as a structured tree. Our method effectively constructs the maximum prime decomposition of Bayesian networks.

In the context of DAG (Directed Acyclic Graph) structure learning, decomposition is pivotal. However, the challenge of minimal triangulation in high-dimensional graphs has hindered progress \cite{xie2006decomposition,xu2011structural}. Our algorithm effectively overcomes this challenge.
  
\section{Conclusion and Discussion}\label{sec-6}

Graph decomposition offers an efficient approach to handling complex undirected graphs by breaking them down into smaller, localized subproblems that can be integrated to yield overall solutions. Traditional decomposition algorithms often face challenges such as minimal triangulation and finding clique minimal separators, leading to significant computation costs for high-dimensional graphs.

This paper introduces a recursive decomposition algorithm that identifies convex hulls containing specific variables, effectively circumventing the bottlenecks encountered by traditional methods. Experimental results demonstrate the algorithm's substantial efficiency compared to conventional approaches, presenting it as a powerful tool for users. Additionally, we explore a parallel variant of the algorithm, whose performance is influenced by the selection of MCS order. Future research will focus on integrating specific MCS variants with the PRDA algorithm to further enhance its applicability.

\bibliographystyle{ACM-Reference-Format}
\bibliography{sample-base}



\end{document}